\documentclass[11pt]{article}

\usepackage{fullpage}
\usepackage{latexsym}
\usepackage{amsfonts,amssymb}
\usepackage{amstext}
\usepackage{graphicx}
\usepackage{color}
\usepackage{url,hyperref}
\usepackage{amsmath,xspace}
\usepackage{shadow,shadethm}




 \newtheorem{lemma}{Lemma}[section]
 \newtheorem{theorem}[lemma]{Theorem}

\newenvironment{proof}{\vspace{-0.15in}\noindent{\bf Proof:}}%
        {\hspace*{\fill}$\Box$\par}

\newenvironment{proofof}[1]{\smallskip\noindent{\bf Proof of #1:}}%
        {\hspace*{\fill}$\Box$\par}

\newcommand{\opt}{\textrm{\sc OPT}\xspace}


\newcommand{\R}{{\mathbb R}}
\newcommand{\tsty}{}

\newcommand{\alg}{\ensuremath{{\sf Alg}}\xspace}

\newcommand{\bcp}{{\sf BCP}\xspace}
\newcommand{\alw}{{\sf ALW}\xspace}
\newcommand{\const}{\frac{2}{\epsilon}}
\newcommand{\constun}{\frac{4}{\epsilon}}
\newcommand{\dphidt}{{\frac{d \Phi_i(t)}{dt}}}
\newcommand{\dphidtside}{d \Phi_i(t)/dt}

\newcommand{\Xomit}[1]{ }

\newcommand{\initOneLiners}{%
    \setlength{\itemsep}{0pt}
    \setlength{\parsep }{0pt}
    \setlength{\topsep }{0pt}
}
\newenvironment{OneLiners}[1][\ensuremath{\bullet}]
    {\begin{list}
        {#1}
        {\initOneLiners}}
    {\end{list}}

\begin{document}

\title{Scalably Scheduling Power-Heterogeneous Processors\footnote{A preliminary version of this paper appeared in ICALP 2010}}

\author{
 Anupam Gupta\thanks{ Computer Science Department, Carnegie Mellon
    University, Pittsburgh, PA 15213, USA. 
Supported in part by
     NSF award CCF-0729022 and an Alfred P.~Sloan
     Fellowship.}
 \and Ravishankar Krishnaswamy$^\dagger$
\and Kirk Pruhs\thanks{Computer Science Department,
University of Pittsburgh, Pittsburgh, PA 15260, USA. 
Supported in part by
 NSF grants CNS-0325353, IIS-0534531, and CCF-0830558, and
 an IBM Faculty Award.}
 }

\date{}
\maketitle

\begin{abstract}
  We show that a natural online algorithm for scheduling jobs on a
  heterogeneous multiprocessor, with arbitrary power functions, is
  scalable for the objective function of weighted flow plus energy.
\end{abstract}

\section{Introduction}
\label{sec:introduction}

Many prominent computer architects believe that architectures consisting
of heterogeneous processors/cores, such as the STI Cell processor, will
be the dominant architectural design in the
future~\cite{Bower2008,Kumar2004,Kumar2006,Merritt2008,Tomer2006}.  The
main advantage of a heterogeneous architecture, relative to an
architecture of identical processors, is that it allows for the
inclusion of processors whose design is specialized for particular types
of jobs, and for jobs to be assigned to a processor best suited for
that job.  Most notably, it is envisioned that these heterogeneous
architectures will consist of a small number of high-power
high-performance processors for critical jobs, and a larger number of
lower-power lower-performance processors for less critical jobs.
Naturally, the lower-power processors would be more energy efficient in
terms of the computation performed per unit of energy expended, and
would generate less heat per unit of computation.  For a given area and
power budget, heterogeneous designs can give significantly better
performance for standard workloads~\cite{Bower2008,Merritt2008};
Emulations in \cite{Kumar2006} suggest a figure of 40\% better
performance, and emulations in \cite{Tomer2006} suggest a figure of
67\% better performance.
Moreover, even processors that were designed to be homogeneous, are
increasingly likely to be heterogeneous at run time~\cite{Bower2008}:
the dominant underlying cause is the increasing variability in the
fabrication process as the feature size is scaled down (although run
time faults will also play a role).  Since manufacturing yields would be
unacceptably low if every processor/core was required to be perfect, and
since there would be significant performance loss from derating the
entire chip to the functioning of the least functional processor (which
is what would be required in order to attain processor homogeneity),
some processor heterogeneity seems inevitable in chips with many processors/cores.

The position paper~\cite{Bower2008} identifies three fundamental
challenges in scheduling heterogeneous multiprocessors: (1)~the OS must
discover the status of each processor, (2)~the OS must discover the
resource demand of each job, and (3)~given this information about
processors and jobs, the OS must match jobs to processors as well as
possible. In this paper, we address this third fundamental challenge.
In particular, we assume that different jobs are of differing
importance, and we study how to assign these jobs to processors of
varying power and varying energy efficiency, so as to achieve the best
possible trade-off between energy and performance.

Formally, we assume that a collection of jobs arrive in an online
fashion over time. When a job $j$ arrives in the system, the system is
able to discover a \emph{size} $p_j \in \R_{> 0}$, as well as a
\emph{importance/weight} $w_j \in \R_{> 0}$, for that job. The
importance $w_j$ specifies an upper bound on the amount of energy that
the system is allowed to invest in running $j$ to reduce $j$'s flow by
one unit of time (assuming that this energy investment in $j$ doesn't
decrease the flow of other jobs)---hence jobs with high weight are more
important, since higher investments of energy are permissible to justify
a fixed reduction in flow. Furthermore, we assume that the system knows the
allowable speeds for each processor, and the system also knows the
power used when each processor is run at its
set of allowable speeds.  We make no real restrictions on the allowable speeds,
or on the power used for these speeds.\footnote{So the processors may or
  may not be speed scalable, the speeds may be continuous or discrete or
  a mixture, the static power may or may not be negligible, the dynamic
  power may or may not satisfy the cube root rule, etc.}  The online
scheduler has three component policies:
\begin{description}
\item[Job Selection:] Determines which job to run on each processor at any time.
\item[Speed Scaling:]Determines the speed of each processor at each time.
\item[Assignment:] When a new job arrives, it determines the processor
to which this new job is assigned.
\end{description}

The objective we consider is that of \emph{weighted flow plus energy}.
The rationale for this objective function is that the optimal schedule
under this objective gives the best possible weighted flow for the
energy invested, and increasing the energy investment will not lead to a
corresponding reduction in weighted flow (intuitively, it is not
possible to speed up a collection of jobs with an investment of energy
proportional to these jobs' importance).

We consider the following natural online algorithm that essentially adopts the
job selection and speed scaling algorithms from the uniprocessor
algorithm in \cite{BCP}, and then greedily assigns the jobs based on
these policies.
\begin{shadebox}
\begin{description}
\item[Job Selection:] Highest Density First (HDF)
\item[Speed Scaling:] The speed is set so that the power is the fractional
  weight of the unfinished jobs.
\item[Assignment:] A new job is assigned to the processor
  that results in the least increase in the projected future weighted
  flow, assuming the adopted speed scaling and job selection policies, and
  ignoring the possibility of jobs arriving in the future.
\end{description}
\end{shadebox}
Our main result is then:
\begin{theorem}
  \label{thm:main1}
  This online algorithm is scalable for scheduling jobs on a
  heterogeneous multiprocessor with arbitrary power functions to
  minimize the objective function of weighted flow plus energy.
\end{theorem}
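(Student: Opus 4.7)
The plan is to establish scalability via an amortized local competitiveness argument, building on the uniprocessor analysis of~\cite{BCP}. Concretely, I would exhibit a non-negative potential $\Phi(t)$ such that (i) $\Phi(0)=\Phi(\infty)=0$, (ii) a running condition $G_A(t) + \frac{d\Phi(t)}{dt} \le (1+O(\epsilon)) \cdot G_{OPT}(t)$ holds at all continuous times (where $G_A$ and $G_{OPT}$ are the rates at which the algorithm and an arbitrary offline schedule accrue weighted-flow-plus-energy cost, and the algorithm is given $(1+\epsilon)$-extra speed), and (iii) $\Phi$ does not jump up at discrete events (job arrivals). Integrating these then yields $(1+\epsilon)$-speed $O(1/\epsilon)$-competitiveness, i.e.\ scalability.

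For the potential I would take a machine-wise sum $\Phi(t) = \sum_i \Phi_i(t)$, where each $\Phi_i$ is the BCP-style per-machine potential: roughly, for each density threshold $q \ge 0$, a concave function of how much more unfinished weight of density at least $q$ the algorithm has in its queue on machine $i$ than OPT does, scaled by $1/\epsilon$. Boundary conditions are immediate since all queues are empty at $t=0$ and $t=\infty$. The running condition decomposes cleanly across machines: on each machine $i$, the algorithm's job-selection (HDF) and speed-scaling (power equal to fractional weight of unfinished jobs) depend only on the jobs currently assigned to $i$, so the BCP per-machine inequality — in which the $(1+\epsilon)$-speed augmentation is exactly what produces the drop in $\Phi_i$ needed to absorb the algorithm's own cost rate on that machine — applies verbatim. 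Summing over machines gives the global running bound.

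The main technical obstacle, and the genuinely new ingredient over the uniprocessor case, is the jump condition at a job arrival. When job $j$ arrives, the algorithm assigns it to machine $i^\star$ greedily minimizing the projected future weighted flow increase, whereas OPT may assign $j$ to some other machine $k$. I would bound $\Delta\Phi$ in two steps. First, I would argue that $\Delta\Phi_{i^\star}$ is essentially the quantity the greedy rule minimizes, so by choice of $i^\star$ it is at most the hypothetical $\Delta\Phi_k$ that would arise had the algorithm instead copied OPT's assignment. Second, I would bound $\Delta\Phi_k$ against the contribution of $j$ to OPT's own weighted flow plus energy on machine $k$: since OPT runs $j$ to completion there, and since HDF on $k$ with $(1+\epsilon)$-speed can only finish higher-density jobs faster than OPT does (and only postpones strictly lower-density ones), the BCP-style integrand on $k$ evaluated with $j$ added is bounded by a constant times the flow $j$ incurs in OPT's schedule. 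Making this second inequality precise for \emph{arbitrary} power functions and arbitrary allowable speed sets — without appealing to convexity or scale-invariance of the power function — is the crux of the proof; it is where the greedy assignment rule, the BCP speed-scaling identity (power equals fractional weight), and the choice of exponent inside the potential must combine correctly to charge every arrival's potential jump against OPT's cost for that arrival.
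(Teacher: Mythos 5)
Your high-level scaffolding (amortized local competitiveness, a per-machine potential summed over machines, greedy assignment handling the arrival condition) matches the paper's strategy. But the proposal contains a genuine gap at exactly the step the paper flags as its main technical contribution.

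You propose to take $\Phi_i$ to be ``the BCP-style per-machine potential'' and claim the BCP per-machine running inequality ``applies verbatim.'' The paper explicitly rules this out: the straw-man potential $\sum_i \Phi_i^{\mathrm{BCP}}$ works for cube-root-type power functions but \emph{fails} for arbitrary power functions. The reason is that the BCP uniprocessor potential can be far larger than the algorithm's actual remaining cost once OPT has an empty queue on some machine; in the uniprocessor analysis this slack is paid for by the energy OPT must have burned earlier, but with multiple machines OPT may simply have routed those jobs elsewhere and incurred no such cost on machine $i$. Consequently the paper introduces a \emph{new} potential,
$\Phi_i(t) = \frac{2}{\epsilon} \int_{q=0}^{\infty} \int_{x=0}^{(w_{a,i}^{t}(q) - w_{o,i}^{t}(q))_+} \frac{x}{Q_i(x)}\,dx\,dq$,
which directly encodes the \emph{extra} cost to finish the algorithm's excess weight over OPT's. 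Note also that $\int_0^y x/Q_i(x)\,dx$ is \emph{convex} in $y$ (since $x/Q_i(x)$ is increasing), not concave as you describe; the relevant structural property the paper exploits is that $x/Q_i(x)$ is increasing and \emph{subadditive}, and the key new tool is a lemma about increasing subadditive $g$ bounding $\int_{w_a}^{w_a+w_j} g - \int_{(w_a-w_o-w_j)_+}^{(w_a-w_o)_+} g$ by $2\int_0^{w_j} g(w_o+x)\,dx$. Your two-step outline for the arrival jump (greedy choice $\Rightarrow$ compare to OPT's machine; then charge to OPT's incremental cost there) is directionally the paper's Lemma on arrival, but the second step hinges precisely on this subadditivity lemma applied to the new integrand, not on any BCP inequality.

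Two smaller omissions: (1) The paper first proves $(1+\epsilon)$-speed $O(1/\epsilon)$-competitiveness for \emph{fractional} weighted flow plus energy, then invokes the known result that HDF is $(1+\epsilon)$-speed $O(1/\epsilon)$-competitive for integer weighted flow at fixed speeds to conclude Theorem~\ref{thm:main1}; your proposal does not address this fractional-to-integral transfer. (2) The paper simplifies the comparison by assuming (at a factor-2 loss) that OPT also runs BCP's job selection and speed scaling on each machine, so that only the assignment policies differ; your proposal compares against an arbitrary offline schedule, which makes the running-condition case analysis considerably harder to pin down.
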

In this context, \emph{scalable} means that if the adversary can run
processor $i$ at speed $s$ and power $P(s)$, the online algorithm is
allowed to run the processor at speed $(1+\epsilon)s$ and power $P(s)$,
and then for all inputs, the online cost is bounded by $O(f(\epsilon))$
times the optimal cost.  Intuitively,
a scalable algorithm can handle almost the same load as
optimal;  for further elaboration, see~\cite{PST,Pruhs07}.  Theorem
\ref{thm:main1} extends theorems showing similar results for weighted
flow plus energy on a uniprocessor~\cite{BCP,Lachlan2009}, and for
weighted flow on a multiprocessor without power
considerations~\cite{Chadha2009}.  As scheduling on identical processors
with the objective of total flow, and scheduling on a uniprocessor with
the objective of weighted flow, are special cases of our problem, constant
competitiveness is not possible without some sort of resource
augmentation~\cite{LR,BC09}.

Our analysis is an amortized local-competitiveness argument. As is
usually the case with such arguments, the main technical hurdle is to
discover the ``right'' potential function. 
The most natural straw-man potential function to try is the sum over all
processors of the single processor potential function used
in~\cite{BCP}.  While one can prove constant competitiveness with this
potential in some special cases (e.g. where for each processor the
allowable speeds are the non-negative reals, and the power satisfies the
cube-root rule), one can not prove constant competitiveness for general
power functions with this potential function.  The reason for this is
that the uniprocessor potential function from~\cite{BCP} is not
sufficiently accurate.  Specifically, one can construct
configurations where the adversary has finished all jobs, and where the
potential is much higher than the remaining online cost. This did not
mess up the analysis in \cite{BCP} because to finish all these jobs by
this time the adversary would have had to run very fast in the past,
wasting a lot of energy, which could then be used to pay for this
unnecessarily high potential.  But since we consider multiple
processors, the adversary may have no jobs left on a particular
processor simply because it assigned these jobs to a different
processor, and there may not be a corresponding unnecessarily high
adversarial cost that can be used to pay for this unnecessarily high
potential. 

Thus, the main technical contribution in this paper is a seemingly
more accurate potential function expressing the additional cost required
to finish one collection of jobs compared to another collection of
jobs. Our potential function is arguably more transparent than the one used in~\cite{BCP},
and we expect that
this potential function will find future application in
the analysis of other power management algorithms.

In section~\ref{dsec:unweighted-flow}, we show that a similar online algorithm is
$O(1/\epsilon)$-competitive with $(1+\epsilon)$-speedup for
\emph{unweighted} flow plus energy.
We also remark that when the power functions $P_i(s)$ are
restricted to be of the form $s^{\alpha_i}$, our algorithms give a
$O(\alpha^2)$-competitive algorithm (with no resource augmentation needed) for the problem of minimizing weighted flow plus energy, and an
$O(\alpha)$-competitive algorithm for minimizing the unweighted flow plus energy,
where $\alpha = \max_i \alpha_i$.


\subsection{Related Results}
\label{sec:related-results}

Let us first consider previous work for the case of a single processor,
with unbounded speed, and a polynomially bounded power function $P(s) =
s^\alpha$.  \cite{PUW} gave an efficient offline algorithm to find the
schedule that minimizes average flow subject to a constraint on the
amount of energy used, in the case that jobs have unit work. \cite{AF}
introduced the objective of flow plus energy and gave a constant
competitive algorithm for this objective in the case of unit
work jobs. \cite{BPS} gave a constant competitive algorithm for the
objective of weighted flow plus energy.  The competitive ratio was
improved by \cite{LLTW08} for the unweighted case using a potential
function specifically tailored to integer flow. \cite{BCLL08} extended
the results of \cite{BPS} to the bounded speed model,
and~\cite{STACS2009} gave a \emph{nonclairvoyant} algorithm that is
$O(1)$-competitive.

Still for a single processor, dropping the assumptions of unbounded
speed and polynomially-bounded power functions, \cite{BCP} gave a
$3$-competitive algorithm for the objective of unweighted flow plus
energy, and a $2$-competitive algorithm for fractional weighted flow
plus energy, both in the uniprocessor case for a large class of power
functions.  The former analysis was subsequently improved
by~\cite{Lachlan2009} to show $2$-competitiveness, along with a matching
lower bound.

Now for multiple processors:
\cite{Lam08} considered the setting of multiple \emph{homogeneous}
processors, where the allowable speeds range between zero and some upper
bound, and the power function is polynomial in this range.  They gave an
algorithm that uses a variant of round-robin for the assignment
policy, and job selection and speed scaling policies from
\cite{BPS}, and showed that this algorithm is scalable for the objective 
of (unweighted) flow plus energy.
Subsequently, \cite{GNS09} showed that a randomized machine selection
algorithm is scalable for weighted flow plus energy (and even more
general objective functions) in the setting of polynomial power 
functions. Both these algorithms provide non-migratory schedules and 
compare their costs with optimal solutions which could even be migratory. 
In comparison, as mentioned above, for the case of polynomial power 
functions, our techniques can give a deterministic constant-competitive 
online algorithm for non-migratory weighted flow time plus energy.
(Details appear in the final version.)

In non-power-aware settings, the paper most relevant to this work is
that of~\cite{Chadha2009}, which gives a scalable online algorithm for
minimizing weighted flow on unrelated processors. Their setting is even
more demanding, since they allow the processing requirement of the job
to be processor dependent (which captures a type of heterogeneity that
is orthogonal to the performance energy-efficiency heterogeneity that we
consider in this paper). Our algorithm is based on the same general
intuition as theirs: they assign each new job to the processor that
would result in the least increment in future weighted flow (assuming
HDF is used for job selection), and show that this online algorithm is
scalable using an amortized local competitiveness argument.  However, it
is unclear how to directly extend their potential function to our
power-aware setting; we  had success only in the case that each processor
had allowable speed-power combinations lying in $\{(0,0), (s_i, P_i)\}$.



\subsection{Preliminaries}
\label{sec:preliminaries}

\subsubsection{Scheduling Basics.}
We consider only non-migratory schedules, which means that no job can
ever run on one processor, and later run on some other processor. In
general, migration is undesirable as the overhead can be significant. We
assume that preemption is allowed, that is, that jobs may be suspended,
and restarted later from the point of suspension.  It is clear that if
preemption is not allowed, bounded competitiveness is not
obtainable.  The speed is the rate at which work is completed; a job $j$
with size $p_j$ run at a constant speed $s$ completes in $\frac{p_j}{s}$
seconds. A job is completed when all of its work has been  processed.
The flow of a job is the completion time of the job minus the release time of the job.
The weighted flow of a job is the weight of the job times the flow of the job.
For a $t \ge r_j$, let $p_j(t)$ be the remaining unprocessed
work on job $j$ at time $t$. The  fractional weight of job $j$ at this
time is $w_j \frac{p_j(t)}{p_j}$.
The fractional weighted flow of a job is the integral over times
between the job's release time
and its completion time of its fractional weight at that time.
The density of a job is its weight divided by its size.
The job selection policy Highest Density First (HDF) always runs the job of highest density.
The inverse density of a job is its size divided by its weight.

\subsubsection{Power Functions.}
The power function for processor $i$ is denoted by $P_i(s)$, and
specifies the power used when processor is run at speed $s$. We
essentially allow any reasonable power function.  However, we do require
the following minimal conditions on each power function, which we adopt
from~\cite{BCP}.  We assume that the allowable speeds are a countable
collection of disjoint subintervals of $[0, \infty)$. We assume that all
the intervals, except possibly the rightmost interval, are closed on
both ends. The rightmost interval may be open on the right if the power
$P_i(s)$ approaches infinity as the speed $s$ approaches the rightmost
endpoint of that interval.  We assume that $P_i$ is non-negative, and
$P_i$ is continuous and differentiable on all but countably many points.
We assume that either there is a maximum allowable speed $T$, or that
the limit inferior of $P_i(s)/s$ as $s$ approaches infinity is not zero
(if this condition doesn't hold then, then the optimal speed scaling
policy is to run at infinite speed). Using transformations specified
in~\cite{BCP}, we may assume without loss of generality that the power
functions satisfy the following properties:
$P$ is continuous and differentiable,
$P(0)=0$, $P$ is strictly increasing,
$P$ is strictly convex, and  $P$ is unbounded.
We use $Q_i$ to denote $P_i^{-1}$; i.e., $Q_i(y)$ gives us the speed
that we can run processor $i$ at, if we specify a limit of $y$.

\subsubsection{Local Competitiveness and Potential Functions.}
Finally, let us quickly review amortized local competitiveness analysis
on a single processor.
Consider an objective $G$. Let $G_A(t)$ be the increase in the objective
in the schedule for algorithm A at time $t$.  So when $G$ is fractional weighted flow
plus energy, $G_A(t)$ is $P_A^t + w_A^t$, where $P_A^t$ is the power
for A at time $t$ and $w_A^t$ is the fractional weight of the unfinished jobs for A at
time $t$.
Let
OPT be the offline adversary that optimizes $G$.
A is locally $c$-competitive if for all times $t$, if
$G_{A}(t) \le c \cdot G_{OPT}(t)$.
To prove A is $(c+d)$-competitive using an amortized local competitiveness argument,
it suffices to give a potential function
$\Phi(t)$ such that the following conditions hold (see for example~\cite{Pruhs07}).

\begin{OneLiners}
\item[{\bf Boundary condition:}] $\Phi$ is zero before any job is
  released and $\Phi$ is non-negative after all jobs are finished.
\item[{\bf Completion condition:}]
$\Phi$ does not increase due to completions by either A or OPT.
\item[{\bf Arrival condition:}]
$\Phi$ does not increase more than $d \cdot OPT$ due to job arrivals.

\item[{\bf Running condition:}]
  At any time $t$ when no job arrives or is completed,
  \begin{equation}\label{eqn:eq1}
    G_{A}(t) + \frac{d\Phi(t)}{dt} \le c \cdot
    G_{OPT}(t)
  \end{equation}
\end{OneLiners}
The sufficiency of these conditions for proving $(c+d)$-competitiveness
follows from integrating them over time.

\section{Weighted Flow}
\label{sec:weighted-flow-time}

Our goal in this section is to prove Theorem \ref{thm:main1}.
We first show that the online algorithm is $(1+\epsilon)$-speed
$O(\frac{1}{\epsilon})$-competitive for the objective of
\emph{fractional} weighted flow plus energy. Theorem \ref{thm:main1}
then follows since HDF is
$(1+\epsilon)$-speed
$O(\frac{1}{\epsilon})$-competitive for fixed processor speeds~\cite{BLMP1}
for the objective of (integer) weighted flow.

Let \opt be some
optimal schedule minimizing fractional weighted flow.
Let $w_{a,i}^{t}(q)$
denote the total fractional weight of jobs in processor
$i$'s queue that have an inverse density of at least $q$.
Let $w_{a,i}^{t} := w_{a,i}^{t}(0)$ be
the total fractional weight of unfinished jobs in the queue.
Let $w_{a}^{t} := \sum_i w_{a,i}^{t}$ be
the total fractional weight of unfinished jobs in all queues.
Let
$w_{o,i}^{t}(q)$,
$w_{o,i}^{t}$, and
$w_{o}^{t}$ be similarly defined for \opt.
When the time instant being considered is clear, we drop the superscript of $t$ from all variables.

We assume that once \opt
has assigned a job to some processor, it runs the \bcp
algorithm~\cite{BCP} for job selection and speed scaling---i.e., it sets
the speed of the $i^{th}$ processor to $Q_i(w_{o,i})$, and hence
the $i^{th}$ processor uses power $W_{o,i}$,
and uses HDF for job selection.
We can make such an assumption because the results of~\cite{BCP} show that the
fractional weighted flow plus energy of the schedule output by this algorithm is within a factor of two of
optimal.  Therefore, the only real difference between \opt and the online algorithm is the
assignment policy.



\subsection{The Assignment Policy}

To better understand the online algorithm's assignment policy,
define the ``shadow potential'' for processor $i$ at time $t$ to be
\begin{equation}
  \label{eq:shadow-wtd} \tsty {\widehat\Phi}_{a,i}(t) = \int_{q =
    0}^{\infty} \int_{x = 0}^{w^{t}_{a,i}(q)} \frac{x}{Q_i(x)} \, dx\, dq
\end{equation}
The shadow potential captures (up to a constant factor) the total fractional weighted flow to serve the current set of jobs if no jobs arrive in the
future.
Based on this, the online algorithm's assignment policy can alternatively be described as follows:

\medskip \noindent {\bf Assignment Policy.} When a new job with size $p_j$ and weight $w_j$
arrives at time $t$, the assignment policy assigns it
to a processor which would cause the smallest increase in the
shadow potential; i.e. a processor minimizing
\begin{eqnarray*}
&&\int_{q=0}^{d_j} \int_{x=0}^{w^{t}_{a,i}(q) + w_j } \frac{x}{Q_i(x)} \,dx\,dq-
\int_{q=0}^{d_j} \int_{x=0}^{w^{t}_{a,i}(q) } \frac{x}{Q_i(x)} \,dx\,dq\\
&=& \int_{q=0}^{d_j} \int_{x = w^{t}_{a,i}(q)}^{ w^{t}_{a,i}(q) + w_j } \frac{x}{Q_i(x)} \,dx\,dq
\end{eqnarray*}


\subsection{Amortized Local Competitiveness Analysis}


We apply a local competitiveness argument as described in subsection \ref{sec:preliminaries}.
Because the online algorithm is using the BCP algorithm on each processor, the power
for the online algorithm is
$\sum_i P_i(Q_i(w_{a,i}))=w_{a}$. Thus $G_A = 2w_{a}$. Similarly, since \opt is using BCP on
each processor $G_{\opt}= 2 w_{o}$.

\subsubsection{Defining the potential function}
For processor $i$, define the potential
\begin{gather}
  \tsty \Phi_i(t) = \const \int_{q=0}^{\infty} \int_{x=0}^{(w_{a,i}^{t}(q) -
    w_{o,i}^{t}(q))_+} \frac{x}{Q_i(x)} \,dx\,dq \label{eq:pot-wf}
\end{gather}
Here $(\cdot)_+ = \max(\cdot, 0)$.
The global potential is then defined to be $\Phi(t) = \sum_i \Phi_i(t)$.
Firstly, we observe that the function $x/Q_i(x)$ is
increasing and subadditive.
Then, the following lemma
will be useful subsequently, the proof of which will appear in the full version of the paper.

\begin{lemma}
  \label{lem:concave-arrival}
  Let $g$ be any increasing subadditive function with $g(0) \geq 0$, and $w_a, w_o, w_j \in \R_{\geq 0}$.
  Then,
  \begin{equation*}
    \tsty \int_{x=w_a}^{w_a + w_j} g(x) \,dx - \int_{x = (w_a - w_o -
      w_j)_{+}}^{(w_a - w_o)_{+}} g(x) \,dx \leq 2 \int_{x=0}^{w_j} g(w_o
    + x) \,dx
  \end{equation*}
\end{lemma}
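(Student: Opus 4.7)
The plan is to prove the inequality by a three-way case analysis on the position of $w_a$ relative to the ``window'' $[w_o, w_o+w_j]$, since this determines the structure of the second integral. Throughout, I will repeatedly exploit subadditivity in the form $g(A+B) \leq g(A) + g(B)$ and monotonicity $g(A) \leq g(A+C)$ for $C \geq 0$.

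A key auxiliary inequality, useful in every case, is $w_j \, g(w_o + w_j) \leq 2\int_0^{w_j} g(w_o + u)\,du$. To see this, apply subadditivity in the form $g(w_o + w_j) \leq g(w_o + u) + g(w_j - u)$ for each $u \in [0,w_j]$ and integrate: the first term gives $\int_0^{w_j} g(w_o+u)\,du$ and the second, after changing variables $v = w_j - u$, gives $\int_0^{w_j} g(v)\,dv \leq \int_0^{w_j} g(w_o + v)\,dv$ by monotonicity.

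In the easy case $w_a \leq w_o$, the second integral vanishes and monotonicity alone gives $\int_0^{w_j} g(w_a + y)\,dy \leq \int_0^{w_j} g(w_o + y)\,dy$. In the case $w_a \geq w_o + w_j$, both integrals have length exactly $w_j$; after the natural change of variables, the left side becomes $\int_0^{w_j}\bigl[g((w_a - w_o - w_j + y) + (w_o+w_j)) - g(w_a - w_o - w_j + y)\bigr]dy$, and subadditivity bounds each integrand by $g(w_o+w_j)$, so the left side is at most $w_j\, g(w_o+w_j)$ and the auxiliary inequality finishes the job.

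The main obstacle will be the middle case $w_o < w_a < w_o + w_j$, where the two integrals have different lengths $w_j$ and $\beta := w_a - w_o < w_j$, so there is no clean pointwise pairing. The plan is to split the first integral at $y = w_j - \beta$. On $[0,w_j-\beta]$, substituting $v = \beta + y$ turns it into $\int_\beta^{w_j} g(w_o + v)\,dv$. On $[w_j-\beta,w_j]$, substituting $u = y + \beta - w_j$ turns it into $\int_0^{\beta} g(w_o+w_j+u)\,du$, which we pair with the second integral $\int_0^\beta g(u)\,du$. To get the factor $2$ and not $3$, I need subadditivity applied twice: first $g(w_o+w_j+u) \leq g(w_o+u) + g(w_j)$, and then $g(w_j) - g(u) \leq g(w_j - u)$. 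Adding everything gives $\int_0^{w_j} g(w_o+u)\,du + \int_{w_j-\beta}^{w_j} g(s)\,ds$, and a final application of monotonicity $g(s) \leq g(w_o + s)$ together with range enlargement bounds the second term by $\int_0^{w_j} g(w_o+s)\,ds$, yielding the desired factor of $2$.
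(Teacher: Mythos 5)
Your proof is correct and follows essentially the same three-case structure as the paper's: the same two boundary cases, the same split point in the middle case, and the same auxiliary inequality $w_j\,g(w_o+w_j)\le 2\int_0^{w_j}g(w_o+u)\,du$ proved the same way. The one real divergence is in the middle case, where the paper first rounds the whole left-hand side up to the flat bound $w_j\,g(w_o+w_j)$ and then invokes the auxiliary inequality, whereas you keep the sliding integrand $g(w_o+\cdot)$ and close with a second application of subadditivity plus range enlargement --- a slightly tighter but equivalent route.
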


That the boundary and completion conditions are satisfied are obvious.
In Lemma \ref{lem:arrival-wtd} we prove that the arrival condition holds,
and in Lemma \ref{lem:running-wtd} we prove that the running condition holds.


\begin{lemma}
  \label{lem:arrival-wtd}
  The arrival condition holds with $d = \frac{4}{\epsilon}$.
\end{lemma}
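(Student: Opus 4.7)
My plan is to bound the per-arrival change $\Delta\Phi(j)$ by a constant multiple of the corresponding jump in OPT's own shadow potential, and then sum and telescope using the fact that OPT's shadow potential decays at a rate equal to OPT's fractional weight.

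Consider the arrival of a job $j$ at time $t$; let $a$ and $o$ denote the processors to which the online algorithm and \opt assign $j$, and write $d_j = p_j/w_j$. Since $w_{a,i}^t(q)$ and $w_{o,i}^t(q)$ only change for $i\in\{a,o\}$, I decompose $\Delta\Phi = \Delta\Phi_a + \Delta\Phi_o$, noting that the case $a=o$ is immediate because both queues grow by the same $w_j$ for $q\le d_j$, leaving $(w_{a,a}(q)-w_{o,a}(q))_+$ fixed. For $a\ne o$: at processor $a$, only the online queue grows, and monotonicity of $g_a(x) := x/Q_a(x)$ (together with a short case analysis on whether $w_{a,a}(q) \ge w_{o,a}(q)$) shows that $\Delta\Phi_a$ is dominated by the increment of the shadow potential at $a$. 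The defining property of the greedy assignment rule then lets me replace that bound by the analogous expression at $o$, with integrand $g_o$ and online's queue $w_{a,o}(q)$. At processor $o$, only \opt's queue grows, so $(w_{a,o}(q)-w_{o,o}(q))_+$ only shrinks and $\Delta\Phi_o$ has an explicit non-positive form. Applying Lemma~\ref{lem:concave-arrival} pointwise in $q$ (with $g = g_o$, $w_a = w_{a,o}(q)$, $w_o = w_{o,o}(q)$) collapses $\Delta\Phi$ to $2\const\int_0^{d_j}\!\int_0^{w_j} g_o(w_{o,o}(q)+x)\,dx\,dq$, and the substitution $y = w_{o,o}(q)+x$ identifies this as $(4/\epsilon)\cdot \Delta\widehat\Phi_{o,o}(j)$, where $\widehat\Phi_{o,o}$ is the \opt-analog of~\eqref{eq:shadow-wtd}.

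To close the argument, I sum over all arrivals and show $\sum_j \Delta\widehat\Phi_{o,o(j)}(j) \le \opt$. Between discrete events, \opt runs HDF with BCP speed scaling, so on processor $i$ the currently processed job has some minimum inverse density $d^*$; then $w_{o,i}^t(q)$ is constant for $q > d^*$ and uniformly equal to $w_{o,i}^t$ on $q\le d^*$, decreasing at rate $Q_i(w_{o,i}^t)/d^*$. Substituting into $\widehat\Phi_{o,i}$ yields the clean identity $d\widehat\Phi_{o,i}/dt = -w_{o,i}^t$, and since $\widehat\Phi_{o,i}$ vanishes at both $t=0$ and $t=\infty$, integration gives $\sum_{j:\,o(j)=i} \Delta\widehat\Phi_{o,i}(j) = \int_0^\infty w_{o,i}^t\,dt$. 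Summing over $i$ produces \opt's total fractional weighted flow, which is bounded above by \opt. Hence $\sum_j \Delta\Phi(j) \le (4/\epsilon)\,\opt$, as required.

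The main obstacle I anticipate is the alignment step: the $a$-side contribution to $\Delta\Phi$ naturally carries the integrand $g_a$, while the $o$-side contribution carries $g_o$, so the two halves cannot be directly combined. The greedy assignment inequality is exactly what rewrites the $a$-side in a form using $g_o$ and online's queue at $o$, so that Lemma~\ref{lem:concave-arrival} can be invoked with a single integrand. Once that alignment is in place, the remaining manipulations are bookkeeping.
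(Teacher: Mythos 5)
Your proof takes essentially the same route as the paper: bound the increase in $\Phi_1$ by monotonicity of $x/Q_1(x)$, shift to processor $2$ via the greedy assignment inequality, and close with Lemma~\ref{lem:concave-arrival}, landing on $\tfrac{4}{\epsilon}$ times the increase in \opt's shadow potential. The one place you go beyond the paper is making the accounting explicit: the paper simply asserts that $\int_{q=0}^{d_j}\int_{x=w_{o,2}(q)}^{w_{o,2}(q)+w_j} x/Q_2(x)\,dx\,dq$ equals the increase in \opt's future fractional weighted flow, leaving it to the reader to see that these per-arrival increments sum to \opt's total fractional flow. Your derivation of the identity $d\widehat\Phi_{o,i}/dt = -w_{o,i}^t$ (under HDF and BCP speed scaling) and the telescoping from $\widehat\Phi_{o,i}=0$ at both endpoints gives a clean, self-contained justification of that step, and is a worthwhile addition even though the underlying argument is identical.
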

\begin{proof}
Consider a new job $j$ with processing time $p_j$, weight $w_j$ and
inverse density $d_j = p_j/w_j$, which the algorithm assigns to
processor~$1$ while the optimal solution assigns it to processor $2$.
Observe that $\int_{q=0}^{d_j}
\int_{x=w_{o,2}(q)}^{w_{o,2}(q) + w_j} \frac{ x}{Q_2( x)} \,dx \,dq $ is
the increase in \opt's fractional weighted flow due to this new job $j$.
Thus our goal is to prove that the increase in the potential due to
job $j$'s arrival is at most this amount.
The change in the potential $\Delta \Phi$ is:
\begin{equation}
  \label{eq:phi1-wtd}
\nonumber \tsty \const \int_{q=0}^{d_j} \left( \int_{x= (w_{a,1}(q) -
    w_{o,1}(q))_{+}}^{(w_{a,1}(q) - w_{o,1}(q) + w_j)_+}
  \frac{x}{Q_1(x)} \,dx - \int_{x= (w_{a,2}(q) -
    w_{o,2}(q) -w_j)_+}^{(w_{a,2}(q) - w_{o,2}(q))_+} \frac{x}{Q_2(x)}
  \,dx  \right) \,dq
\end{equation}
Now, since $x/Q_1(x)$ is an increasing function we have that
\[
\tsty
\int_{x= (w_{a,1}(q) - w_{o,1}(q))_{+}}^{(w_{a,1}(q) - w_{o,1}(q) +
  w_j)_+} \frac{x}{Q_1(x)} \,dx \leq \int_{x= w_{a,1}(q)}^{w_{a,1}(q) +
  w_j} \frac{x}{Q_1(x)} \,dx
\]
and hence the change of potential can be bounded by
\[
\tsty \const \int_{q=0}^{d_j} \left( \int_{x= w_{a,1}(q)}^{w_{a,1}(q) +
  w_j} \frac{x}{Q_1(x)} \,dx - \int_{x= (w_{a,2}(q) -
  w_{o,2}(q)-w_j)_+}^{(w_{a,2}(q) - w_{o,2}(q))_+} \frac{x}{Q_2(x)}
\,dx \right) \,dq
\]
Since we assigned the job to processor~$1$, we know that
\[ \int_{q=0}^{d_j} \int_{x= w_{a,1}(q)}^{w_{a,1}(q) + w_j}
\frac{x}{Q_1(x)} \,dx\,dq \leq \int_{q=0}^{d_j} \int_{x=
  w_{a,2}(q)}^{w_{a,2}(q) + w_j} \frac{x}{Q_2(x)} \,dx\,dq
\]
Therefore, the change in potential is at most
\begin{align*}
  \Delta \Phi &\leq \tsty \const \int_{q=0}^{d_j} \left( \int_{x=
    w_{a,2}(q)}^{w_{a,2}(q) + w_j} \frac{x}{Q_2(x)} \,dx -
  \int_{x= (w_{a,2}(q) - w_{o,2}(q) - w_j)_+}^{(w_{a,2}(q) - w_{o,2}(q))_+} \frac{x}{Q_2(x)} \,dx  \right) \,dq
\end{align*}
Applying Lemma~\ref{lem:concave-arrival}, we get:
\begin{align*}
  \Delta \Phi & \leq \big( 2 \cdot \const \big) \int_{q=0}^{d_j}
  \int_{x=w_{o,2}(q)}^{w_{o,2}(q) + w_j} \frac{ x}{Q_2( x)} \,dx \,dq
\end{align*}
\end{proof}

\begin{lemma}
  \label{lem:running-wtd}
The running condition holds with constant $c=1+\frac{1}{\epsilon}$.
\end{lemma}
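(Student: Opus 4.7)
Plan: I will prove the per-processor version of the running condition,
\[ 2 w_{a,i} + \frac{d\Phi_i}{dt} \le 2\Bigl(1 + \frac{1}{\epsilon}\Bigr) w_{o,i}, \]
on each processor $i$ and sum. Differentiating (\ref{eq:pot-wf}) through the upper limit of the inner integral, and using the fact that A and OPT both run HDF on their respective queues (so if A is currently running a job of inverse density $q_a^*$ at the boosted speed $s_a = (1+\epsilon) Q_i(w_{a,i})$, then $w_{a,i}(q,t) = w_{a,i}$ for $q \in [0, q_a^*]$ and $\partial_t w_{a,i}(q,t) = -s_a/q_a^*$ on that range; similarly for OPT with $s_o = Q_i(w_{o,i})$ and $q_o^*$), one obtains $d\Phi_i/dt = A_i + O_i$, where
\[ A_i = -\frac{2}{\epsilon}\frac{s_a}{q_a^*}\int_0^{q_a^*}\frac{\delta_i(q)}{Q_i(\delta_i(q))}\,dq, \qquad O_i = \frac{2}{\epsilon}\frac{s_o}{q_o^*}\int_0^{q_o^*}\frac{\delta_i(q)}{Q_i(\delta_i(q))}\,dq, \]
with $\delta_i(q) := (w_{a,i}(q) - w_{o,i}(q))_+$.

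If $w_{a,i} \le w_{o,i}$, then HDF on OPT gives $w_{o,i}(q) = w_{o,i} \ge w_{a,i} \ge w_{a,i}(q)$ throughout $[0, q_o^*]$, so $\delta_i(q) \equiv 0$ there, killing $O_i$; combined with $A_i \le 0$, the per-processor bound follows from $2 w_{a,i} \le 2 w_{o,i}$.

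In the main case $w_{a,i} > w_{o,i}$, set $\delta_i^* := w_{a,i} - w_{o,i}$. By HDF, $w_{a,i}(q) = w_{a,i}$ on $[0, q_a^*]$ and $w_{o,i}(q) \le w_{o,i}$ everywhere, so $\delta_i(q) \ge \delta_i^*$ on $[0, q_a^*]$; since $x/Q_i(x)$ is increasing, the integrand in $A_i$ is at least $\delta_i^*/Q_i(\delta_i^*)$, yielding $|A_i| \ge \frac{2(1+\epsilon)}{\epsilon}\, Q_i(w_{a,i})\,\delta_i^*/Q_i(\delta_i^*)$. Symmetrically $w_{o,i}(q) = w_{o,i}$ on $[0, q_o^*]$ and $w_{a,i}(q) \le w_{a,i}$ everywhere, so $\delta_i(q) \le \delta_i^*$ there and $O_i \le \frac{2}{\epsilon}\, Q_i(w_{o,i})\,\delta_i^*/Q_i(\delta_i^*)$. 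Since $Q_i(w_{o,i}) \le Q_i(w_{a,i})$,
\[ A_i + O_i \le -2\,\frac{Q_i(w_{a,i})\,\delta_i^*}{Q_i(\delta_i^*)}. \]

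It then remains to verify the algebraic inequality $2 w_{a,i} - 2 Q_i(w_{a,i})\delta_i^*/Q_i(\delta_i^*) \le 2(1 + 1/\epsilon) w_{o,i}$. Setting $u := Q_i(w_{a,i})/Q_i(\delta_i^*)$ and $v := w_{a,i}/\delta_i^*$, both at least $1$ (since $Q_i$ is increasing and $w_{a,i} \ge \delta_i^*$), a short calculation shows this rearranges to $\epsilon u + v \ge 1 + \epsilon$, which is immediate. Summing over processors gives $G_A(t) + d\Phi(t)/dt \le (1 + 1/\epsilon)\, G_{OPT}(t)$. The main obstacle is getting a clean ``double sandwich'' of both integrands by the single quantity $\delta_i^*/Q_i(\delta_i^*)$: HDF by both A and OPT is essential for this, and the $(1+\epsilon)$-speedup factor appearing in $s_a$ provides precisely the slack needed in the final algebraic step (the overall bound is tight at $w_{o,i} = 0$, confirming that the speedup is necessary).
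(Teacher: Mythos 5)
Your proof follows the paper's structure closely: split on $w_{a,i}$ versus $w_{o,i}$, use HDF on both sides to compare $\delta_i(q)$ against $\delta_i^* := w_{a,i}-w_{o,i}$ in the appropriate direction, and invoke monotonicity of $x/Q_i(x)$ together with the $(1+\epsilon)$ speedup. The main case $w_{a,i} > w_{o,i}$ is correct, though the closing $u,v$ algebra is more work than needed: from $A_i+O_i \le -2\,Q_i(w_{a,i})\,\delta_i^*/Q_i(\delta_i^*)$, one more application of $Q_i(w_{a,i}) \ge Q_i(\delta_i^*)$ gives $A_i+O_i \le -2\delta_i^*$ and hence $2w_{a,i}+A_i+O_i \le 2w_{o,i}$ directly (so in this case the constant is in fact $1$). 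The strict case $w_{a,i}<w_{o,i}$ is also fine.

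The genuine gap is the boundary $w_{a,i}=w_{o,i}$, which you absorb into your first case by asserting that $\delta_i(q)\equiv 0$ on $[0,q_o^*]$ ``kills $O_i$.'' But your formula for $O_i$ comes from differentiating $\int_0^{\delta_i(q)} x/Q_i(x)\,dx$ through its upper limit, which is only legitimate away from the kink of $(\cdot)_+$ at zero. When $\delta_i(q)=0$ and OPT's processing pushes $w_{o,i}(q)$ down faster than A reduces $w_{a,i}(q)$ (which happens, e.g., when $q_o<q_a/(1+\epsilon)$), the right time-derivative of $\int_0^{\delta_i(q)}x/Q_i(x)\,dx$ equals $g(0^+)\cdot\partial_t\big(w_{a,i}(q)-w_{o,i}(q)\big)_+$ with $g(0^+):=\lim_{x\to 0^+}x/Q_i(x)$, and this limit need not vanish: $P_i(s)=s+s^2$ satisfies all of the paper's hypotheses and gives $Q_i(y)\sim y$ near $0$, so $g(0^+)=1$. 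Thus $\Phi_i$ can strictly increase at $w_{a,i}=w_{o,i}$ and your conclusion $d\Phi_i/dt\le 0$ fails. The paper isolates this as its Case~(3), bounding the increase by $\frac{2}{\epsilon}w_{o,i}\,dt$, and the resulting inequality $2w_{o,i}+\frac{2}{\epsilon}w_{o,i}\le 2\big(1+\frac{1}{\epsilon}\big)w_{o,i}$ holds with equality. This is the only place the constant $c>1$ is forced; your two cases both admit a $c=1$ bound once sharpened, so the argument as written cannot account for the lemma's constant $1+\frac{1}{\epsilon}$---you are skipping precisely the binding case.
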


\begin{proof}
Let us consider an infinitesimally small interval $[t, t+dt)$ during
which no jobs arrive and analyze the change in the potential $\Phi(t)$.
Since $\Phi(t) = \sum_i \Phi_i(t)$, we can do this on a per-processor
basis. Fix a single processor $i$, and time $t$.
Let $w_i(q) := (w_{a,i}(q) - w_{o,i}(q))_+$, and
$w_i := (w_{a,i} - w_{o,i})_+$.
Let $q_a$ and $q_o$ denote the inverse densities of the jobs being
executed on processor $i$ by the algorithm and optimal solution
respectively (which are the densest jobs in their respective queues,
since both run HDF). Define $s_a = Q_i(w_{a,i})$ and $s_o =
Q_i(w_{o,i})$. Since we assumed that \opt uses the \bcp algorithm on each
processor,  \opt runs processor $i$ at speed $s_o$.
Since the online algorithm is also using \bcp, but has
$(1+\epsilon)$-speed augmentation, the online algorithms
runs the processor at speed
$(1+\epsilon) s_a$. Hence the
fractional weight of the job the online algorithm works on decreases at
a rate of $s_a (1+\epsilon)/ q_a$.  Therefore, the quantity
$w_{a,i}(q)$ drops by $s_a \,dt (1+\epsilon)/ q_a$ for $q \in [0, q_a]$.
Likewise, $w_{o,i}(q)$ drops by $s_o \,dt/q_o$ for $q \in [0, q_o]$ due to
the optimal algorithm working on its densest job. We consider
several different cases based on the values of $q_o, q_a, w_{o,i}$, and $w_{a,i}$ and establish bounds on $\dphidtside$;
Recall the definition of $\Phi_i(t)$
from equation~(\ref{eq:pot-wf}):
\begin{equation*}
  \Phi_i(t) = \const \int_{q=0}^{\infty} \int_{x=0}^{(w_{a,i}^{t}(q) -
    w_{o,i}^{t}(q))_+} \frac{x}{Q_i(x)} \,dx\,dq
\end{equation*}

\medskip \noindent {\bf Case (1): $w_{a,i} < w_{o,i}$}:
The only
possible increase in potential function occurs due to the decrease in
$w_{o,i}(q)$, which happens for values of $q \in [0, q_o]$.
But for $q$'s in this range, $w_{a,i}(q) \le w_{a,i}$ and $w_{o,i}(q) = w_{o,i}$.
Thus the inner integral is empty, resulting in no increase in potential.
The running condition then holds since $w_{a,i} < w_{o,i}$.

\medskip \noindent {\bf Case (2): $w_{a,i} > w_{o,i}$}:
To quantify the change in potential due to the
online algorithm working, observe that for any $q \in [0, q_a]$, the
inner integral of $\Phi_i$ decreases by
\[
\int_{x= 0}^{w_i(q)} \frac{x}{Q_i(x)} \,dx - \int_{x= 0}^{w_i(q) -
  (1+\epsilon) \frac{s_a \,dt}{q_a}} \frac{x}{Q_i(x)} \,dx
= \frac{w_i(q)}{Q_i(w_i(q))} (1+\epsilon) \frac{s_a \,dt}{q_a}
\]
Here, we have used the fact that $dt$ is infinitisemally small to get the above equality. Hence, the total drop in $\Phi_i$ due to the online algorithm's
processing is
\begin{eqnarray*}
\const \int_{q = 0}^{q_a} \frac{w_i(q)}{Q_i(w_i(q))} (1+\epsilon) \frac{s_a
  \,dt}{q_a} \,dq  &\geq &
  \const \int_{q = 0}^{q_a} \frac{w_i}{Q_i(w_i)} (1+\epsilon) \frac{s_a
  \,dt}{q_a} \,dq\\
&=& \const \frac{w_i}{Q_i(w_i)} (1+\epsilon) s_a \,dt
\end{eqnarray*}
Here, the first inequality holds because $x/Q_i(x)$ is a non-decreasing function, and for all $q \in [0, q_a]$, we have $w_{a,i}(q) = w_{a,i}$ and $w_{o,i}(q) \leq w_{o,i}$ and hence $w_i(q) \geq w_i$.

Now to quantify the increase in the potential due to the optimal algorithm
working: observe that for $q \in [0, q_o]$, the inner integral of
$\Phi_i$ increases by at most
\[ \int_{x = w_{i}(q)}^{w_i(q) + \frac{s_o \,dt}{q_o} } \frac{x}{Q_i(x)}
\,dx = \frac{w_i(q)}{Q_i(w_i(q))} \frac{s_o \,dt}{q_o}
\]
Again notice that we have used that fact that here $dt$ is an infinitesimal period of time that in the limit is zero. Hence the total increase in $\Phi_i$ due to the optimal algorithm's processing
is at most
\[  \const \int_{q = 0}^{q_o} \frac{w_i(q)}{Q_i(w_i(q))} \frac{s_o \,dt}{q_o}
\,dq \leq
\const \int_{q = 0}^{q_o} \frac{w_i}{Q_i(w_i)} \frac{s_o \,dt}{q_o}
\,dq =  \const \frac{w_i}{Q_i(w_i)} s_o \,dt.
\]
Again here, the first inequality holds because $x/Q_i(x)$ is a non-decreasing function, and for all $q \in [0, q_o]$, we have $w_{a,i}(q) \leq w_{a,i}$ and $w_{o,i}(q) = w_{o,i}$ and hence $w_i(q) \leq w_i$.

Putting the two together, the overall increase in $\Phi_i(t)$ can be
bounded by
\begin{align*}
  \dphidt &\leq \const \frac{w_{a,i} - w_{o,i}}{Q_i(w_{a,i} - w_{o,i})}
  \left[ -(1+\epsilon) s_a + s_o \right] \\
  &= \const ( w_{a,i} - w_{o,i} ) \frac {[ -(1+\epsilon) Q_i(w_{a,i}) +
    Q_i(w_{o,i}) ]} {Q_i(w_{a,i} - w_{o,i})}  \\
  &\leq - \const \epsilon (w_{a,i} - w_{o,i}) = - 2 (w_{a,i} -
  w_{o,i}) \label{eq:phi}
\end{align*}
It is now easy to verify that by plugging this bound on $\dphidt$ into the
running condition that one gets a valid inequality.

\medskip \noindent {\bf Case (3): $w_{a,i} = w_{o,i}$}: In this case,
let us just consider the increase due to \opt working. The inner
integral in the potential function starts off from zero (since $w_{a,i}
- w_{o,i} = 0$) and potentially (in the worst case) could increase to
\[ \int_{0}^{ \frac{s_o
    \,dt}{q_o}} \frac{x}{Q_i(x)} \,dx
\] (since $w_{o,i}$ drops by $s_o \,dt/q_o$ and $w_{a,i}$ cannot
increase). However, since $x/Q_i(x)$ is a monotone non-decreasing
function, this is at most
\[  \int_{0}^{ \frac{s_o \,dt}{q_o}} \frac{w_{o,i}}{Q_i(w_{o,i})} \,dx =
\frac{s_o \,dt}{q_o} \frac{w_{o,i}}{Q_i(w_{o,i})}
\]
Therefore, the total increase in the potential $\Phi_i(t)$ can be
bounded by
\[ \const \int_{q = 0}^{q_o} \frac{w_{o,i}}{Q_i(w_{o,i})}
\frac{s_o \,dt}{q_o} \,dq = \const s_o \,dt \frac{w_{o,i}}{Q_i(w_{o,i})} =
\const w_{o,i} \,dt
\]
It is now easy to verify that by plugging this bound on $\dphidt$
into the
running condition,
and using the fact that $w_{a,i} = w_{o,i}$,
one gets a valid inequality.
\end{proof}

\section{Algorithm for Unweighted Flow}
\label{dsec:unweighted-flow}

In this section, we give an immediate assignment based scheduling policy
and show that it is $O(1/\epsilon)$-competitive against a non-migratory
adversary for the objective of \emph{unweighted} flow plus energy,
assuming the online algorithm has resource augmentation of
$(1+\epsilon)$ in speed. Note that this result has a better
competitiveness than the result for weighted flow from
Section~\ref{sec:weighted-flow-time}, but holds only for the unweighted case.

We begin by giving intuition behind our algorithm, which is again
similar to that for the weighted case. Let \opt be some optimal
schedule. 
However, for the rest of the section, we assume that on a single machine, the optimal scheduling algorithm for minimizing sum of flow times plus energy on a single machine
is that of Andrew et al.\cite{Lachlan2009} which sets the power at any time to be $Q(n)$ when there are $n$ unfinished jobs, and processes jobs according to SRPT.
Since we know that this \alw algorithm~\cite{Lachlan2009} is
$2$-competitive against the optimal schedule on a single processor, we
will imagine that, once \opt has assigned a job to some processor, it
uses the \alw algorithm on each processor. Likewise, once our assignment
policy assigns a job to some processor, our algorithm also runs the \alw algorithm on each processor.
Therefore, just like the weighted case, the crux of our algorithm is in designing a good assignment policy,
and arguing that it is $O(1)$-competitive even though our algorithm and
\opt may schedule a new job on different processors with completely
different power functions.

\subsection{Algorithm}
\label{dsec:algorithm-uf}

Our algorithm works as follows: Each processor maintains a queue of jobs
that have currently been assigned to it.  At some time instant $t$, for
any processor $i$, let $n_{a,i}^{t}(q)$ denote the number of jobs in processor
$i$'s queue that have a remaining processing time of at least $q$. Let
$n_{a,i}^{t}$ denote the total number of unfinished jobs in the queue. Also,
let us define the \emph{shadow potential} for processor $i$ at this time
$t$ as
\begin{equation}
  \label{deq:shadow} \tsty {\widehat\Phi}_{a,i}(t) = \int_{q = 0}^{\infty} \sum_{j =
    1}^{n^{t}_{a,i}(q)} \frac{j}{Q_i(j)} \,dq
\end{equation}
Note that the shadow potential
${\widehat\Phi}_{a,i}(t)$ is the total future cost of the online
algorithm (up to a constant factor) assuming no jobs arrive after this
time instant, and the online algorithm runs the \alw algorithm on all
processors (i.e., the job selection is SRPT, and the processor is run at
a speed of $Q_i(n_{a,i}^{t})$). Now our algorithm is the following:
\begin{shadebox}
  \noindent When a new job arrives, the assignment policy assigns it to a
  processor which would cause the smallest increase in the ``shadow
  potential''; i.e., a processor minimizing
  \[
  \int_{q=0}^{p} \sum_{j=1}^{n^{t}_{a,i}(q) + 1 } \frac{j}{Q_i(j)} \,dq -
  \int_{q=0}^{p} \sum_{j=1}^{n^{t}_{a,i}(q) } \frac{j}{Q_i(j)} \,dq =
  \int_{q=0}^{p} \frac{(n^{t}_{a,i}(q) + 1)}{Q_i(n^{t}_{a,i}(q) + 1)} \,dq
  \]
  The job selection on each processor is SRPT (Shortest Remaining
  Processing Time), and we set the power of processor $i$ at time $t$ to $n_{a,i}^{t}$.  Once the job is
  assigned to a processor, it is never migrated.
\end{shadebox}


\subsection{The Amortized Local-Competitive Analysis}
\label{dsec:analysis-uf}

We again employ a potential function based analysis, similar to the one in Section~\ref{sec:weighted-flow-time}.

\subsubsection{The Potential Function.}

We now describe our potential function $\Phi$.  For time $t$ and
processor $i$, recall the definitions $n_{a,i}^{t}$ and $n_{a,i}^t(q)$ given above;
analogously define $n_{o,i}^{t}$ as the number of unfinished jobs assigned
to processor $i$ by the optimal solution at time $t$, and $n_{o,i}^{t}(q)$ to be the
number of these jobs with remaining processing time at least $q$.
Henceforth, we will drop the superscript $t$ from these terms whenever the time instant $t$ is clear from the context.

Now, we define the
global potential function to be $\Phi(t) = \sum_i \Phi_i(t)$, where
$\Phi_i(t)$ is the potential for processor $i$ defined as:
\begin{gather}
  \tsty \Phi_i(t) = \constun \int_{q=0}^{\infty} \sum_{j=1}^{(n_{a,i}^{t}(q) -
    n_{o,i}^{t}(q))_+} j/Q_i(j) \,dq \label{deq:pot-uf}
\end{gather}
 Recall that $(x)_+ = \max(x,0)$, and $Q_i = P_i^{-1}$.
Notice that if the optimal solution has no jobs remaining on processor $i$ at time
$t$, we get $\Phi_i(t)$ is (within a constant off) simply ${\widehat\Phi}_{a,i}(t)$.

\subsubsection{Proving the Arrival Condition.}

We now show that the increase in the potential $\Phi$ is bounded (up to
a constant factor) by the increase in the future optimal cost when a new
job arrives. Suppose a new job of size $p$ arrives at time $t$, and suppose
 the online algorithm assigns it
to processor~$1$ while the optimal solution assigns it to processor
$2$. Then $\Phi_1$ increases since $n_{a,1}(q)$ goes up by $1$ for all
$q \in [0,p]$, $\Phi_2$ could decrease due to $n_{o,2}(q)$ dropping by
$1$ for all $q \in [0,p]$, and $\Phi_i$ (for $i \notin \{1,2\}$) does
not change.

Let us first assume that $n_{a,i}(q) \geq n_{o,i}(q)$ for all $q \in
[0,p]$ and for $i \in \{1,2\}$; we will show below how to remove this
assumption. Under this assumption, the total change in potential $\Phi$
is
\[
\tsty \constun \int_{q=0}^{p} \big( \frac{n_{a,1}(q) - n_{o,1}(q)+1}{Q_1(
  n_{a,1}(q) - n_{o,1}(q) + 1)} - \frac{n_{a,2}(q) - n_{o,2}(q) }{Q_2(
  n_{a,2}(q) - n_{o,2}(q))} \big) \,dq
\]
But since $x/Q(x)$ is increasing this is less than
\begin{equation}
  \label{deq:phi1}
\tsty  \constun \int_{q=0}^{p} \big( \frac{n_{a,1}(q) + 1}{Q_1( n_{a,1}(q) + 1)}  -
  \frac{n_{a,2}(q) - n_{o,2}(q) }{Q_2( n_{a,2}(q) - n_{o,2}(q))} \big) \,dq
\end{equation}
By the greedy choice of processor $1$ (instead of $2$), this is less
than
\begin{equation}
  \label{deq:phi2}
\tsty  \constun \int_{q=0}^{p} \big( \frac{n_{a,2}(q) + 1}{Q_2( n_{a,2}(q) + 1)}  -
  \frac{n_{a,2}(q) - n_{o,2}(q) }{Q_2( n_{a,2}(q) - n_{o,2}(q) )} \big) \,dq
\end{equation}
Now, since $x/Q(x)$ is subadditive this is less than $\constun
\int_{q=0}^{p} \frac{n_{o,2}(q) + 1}{Q_2( n_{o,2}(q) + 1)} \,dq$, which in
turn is (within a factor of $\constun$) precisely the \emph{increase in
  the future cost} incurred by the optimal solution, since we had
assumed that \opt also runs the \alw algorithm on its processors.

Now suppose $n_{a,1}(q) < n_{o,1}(q)$ for some $q \in [0,p]$. There
is no increase in the inner sum of $\Phi_1$ for such values of $q$, and
hence we can trivially upper bound this zero increase by $\constun
\int_{q=0}^{p} \frac{n_{a,1}(q) + 1}{Q_1( n_{a,1}(q) + 1)} \leq \constun
\int_{q=0}^{p} \frac{n_{a,2}(q) + 1}{Q_2( n_{a,2}(q) + 1)}$. And to
discharge the assumption for processor $2$, note that if $n_{a,2}(q) <
n_{o,2}(q)$ for some $q$, there is no decrease in the inner sum of
$\Phi_2$ for this value of $q$, but in this case we can simply use
$\frac{n_{a,2}(q) + 1}{Q_1( n_{a,2}(q) + 1)} \leq \frac{n_{o,2}(q) +
  1}{Q_1( n_{o,2}(q) + 1)}$ for such values of $q$.
Therefore, we get the same bound of \[ \constun \int_{q=0}^{p}
\frac{n_{o,2}(q) + 1}{Q_2( n_{o,2}(q) + 1)} \,dq \] on the increase in all
cases, thus proving the following lemma.
\begin{lemma}
  \label{dlem:arrival} The arrival condition holds for the unweighted
  case with $d = \constun$.
\end{lemma}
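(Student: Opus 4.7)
The plan is to bound the jump in $\Phi$ at an arrival event directly, matching it against the jump in \opt's future cost. Fix a new job of size $p$ arriving at time $t$, and let processor $1$ be where the online algorithm places it and processor $2$ be where \opt places it. Since neither $n_{a,i}$ nor $n_{o,i}$ changes for $i \notin \{1,2\}$, only $\Phi_1$ and $\Phi_2$ can move: $\Phi_1$ may increase because $n_{a,1}(q)$ jumps up by $1$ on $q\in[0,p]$, and $\Phi_2$ may decrease because $n_{o,2}(q)$ drops by $1$ on the same interval. Since we are assuming \opt runs the \alw algorithm on each processor, its increase in future cost due to the arrival is, up to constants, $\int_{q=0}^{p} \frac{n_{o,2}(q)+1}{Q_2(n_{o,2}(q)+1)}\,dq$, so this is the quantity I need to dominate (times $\frac{4}{\epsilon}$).

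First I would work under the simplifying assumption that the $(\cdot)_+$ clippings are inactive, i.e., $n_{a,i}(q) \ge n_{o,i}(q)$ for all $q\in[0,p]$ and $i\in\{1,2\}$. Under this assumption, $\Delta\Phi$ equals an integral over $q\in[0,p]$ of
\[
\tfrac{4}{\epsilon}\left(\tfrac{n_{a,1}(q)-n_{o,1}(q)+1}{Q_1(n_{a,1}(q)-n_{o,1}(q)+1)} - \tfrac{n_{a,2}(q)-n_{o,2}(q)}{Q_2(n_{a,2}(q)-n_{o,2}(q))}\right).
\]
I would then chain three inequalities. (i) Monotonicity of $x/Q_1(x)$ upper-bounds the first summand by $\frac{n_{a,1}(q)+1}{Q_1(n_{a,1}(q)+1)}$, which is precisely the integrand of the shadow-potential increment on processor $1$. (ii) The greedy assignment rule, which selected processor $1$ over processor $2$ to minimize the shadow-potential increment, lets me replace this by $\frac{n_{a,2}(q)+1}{Q_2(n_{a,2}(q)+1)}$. (iii) Subadditivity of $x/Q_2(x)$ then gives
\[
\tfrac{n_{a,2}(q)+1}{Q_2(n_{a,2}(q)+1)}-\tfrac{n_{a,2}(q)-n_{o,2}(q)}{Q_2(n_{a,2}(q)-n_{o,2}(q))} \le \tfrac{n_{o,2}(q)+1}{Q_2(n_{o,2}(q)+1)},
\]
which integrated over $q\in[0,p]$ is exactly what we wanted.

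Finally I would discharge the clipping assumption with two easy pointwise observations. On $\{q : n_{a,1}(q) < n_{o,1}(q)\}$ the $(\cdot)_+$ in $\Phi_1$ clips the summand to zero, so there is no actual increase to charge there, and the trivial bound of zero already meets the target. On $\{q : n_{a,2}(q) < n_{o,2}(q)\}$ the subtracted $\Phi_2$ term contributes no decrease, but in that case monotonicity of $x/Q_2(x)$ still gives $\frac{n_{a,2}(q)+1}{Q_2(n_{a,2}(q)+1)} \le \frac{n_{o,2}(q)+1}{Q_2(n_{o,2}(q)+1)}$, matching the target pointwise. Together these handle all residual $q$, and the lemma follows with $d=\frac{4}{\epsilon}$.

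The main obstacle is the combinatorial bookkeeping around $(\cdot)_+$: one needs to verify that in every region of $q$ at least one of monotonicity, subadditivity, or the greedy choice produces a suitable pointwise bound by $\frac{n_{o,2}(q)+1}{Q_2(n_{o,2}(q)+1)}$. The subadditivity step $g(a+b)-g(a)\le g(b)$ applied to $g(j)=j/Q_2(j)$ is the one substantive analytic input and uses the strict convexity / invertibility assumptions on $P_i$; the rest is a careful case split.
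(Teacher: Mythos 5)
Your proof is correct and follows the paper's argument essentially line for line: the same decomposition into $\Phi_1$ and $\Phi_2$, the same initial assumption that the $(\cdot)_+$ clippings are inactive, the identical chain of monotonicity $\rightarrow$ greedy choice $\rightarrow$ subadditivity of $x/Q(x)$, and the same pointwise discharge of the two clipping regimes. The only nit is a slight mischaracterization at the end: subadditivity of $x/Q_i(x)$ rests on concavity of $Q_i = P_i^{-1}$, i.e.\ on (ordinary, not strict) convexity of $P_i$, rather than on strict convexity or invertibility per se.
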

\subsubsection{Proving the Running Condition.}
In this section, our goal is to analyze the change in $\Phi$ in an
infinitesimally small time interval $[t, t+ dt)$ and compare $d\Phi/dt$
to $d\alg/dt$ and $d\opt/dt$. We do this on a per-processor basis; let
us focus on processor $i$ at time $t$. Recall (after dropping the $t$ superscripts) the definitions of
$n_{a,i}(q), n_{o,i}(q), n_{a,i}$ and $n_{o,i}$ from above, and define
$n_i(q) := (n_{a,i}(q) - n_{o,i}(q))_+$. Finally, let $q_a$ and $q_o$
denote the remaining sizes of the jobs being worked on by the algorithm
and optimal solution respectively at time $t$; recall that both of them
use SRPT for job selection.  Define $s_a = Q_i(n_{a,i})$ and $s_o =
Q_i(n_{o,i})$ to be the (unaugmented) speeds of processor $i$
according to the online algorithm and the optimal algorithm
respectively---though, since we assume resource augmentation, our
processor runs at speed $(1+ \epsilon) Q_i(n_{a,i})$.  Hence
$n_{a,i}(q)$ drops by $1$ for $q \in (q_a - (1+\epsilon) s_a \,dt, q_a]$
and $n_{o,i}(q)$ drops by $1$ for $q \in (q_o - s_o \,dt, q_o]$ for the
optimal algorithm. Let $I_a := (q_a - (1+\epsilon) s_a \,dt, q_a]$ and
$I_o := (q_o - s_o \,dt, q_o]$ denote these two intervals. Let us consider
some cases:


\medskip \noindent {\bf Case (1): $n_{a,i} < n_{o,i}$}: The increase in
potential function may occur due to $n_{o,i}(q)$ dropping by $1$ in $q
\in I_o$. However, since $n_{a,i} < n_{o,i}$, it follows that
$n_{a,i}(q) \leq n_{a,i} < n_{o,i} = n_{o,i}(q)$ for all $q \in I_o$;
the equality follows from \opt using SRPT.  Consequently, even with
$n_{o,i}(q)$ dropping by $1$, $n_{a,i}(q) - n_{o,i}(q) \leq 0$ and there
is no increase in potential, or equivalently $\dphidtside \leq 0$.
Hence, in this case, $4 n_{a,i} + \dphidtside \leq 4 n_{o,i}$.

\medskip \noindent {\bf Case (2a): $n_{a,i} \geq n_{o,i}$, and $q_a <
  q_o$}: For $q \in I_a$, the inner summation of $\Phi_i$ drops by $
\frac{n_{a,i}(q) - n_{o,i}(q)}{Q_i(n_{a,i}(q) - n_{o,i}(q))}$, since
$n_{a,i}(q)$ decreases by~$1$. Moreover, $n_{a,i}(q) = n_{a,i}$ and
$n_{o,i}(q) = n_{o,i}$, because both \alg and \opt run
SRPT, and we're considering $q \leq q_a < q_o$.  For $q \in I_o$, the
inner summation of $\Phi_i$ increases by $\frac{n_{a,i}(q) - (n_{o,i}(q)
  - 1)}{Q_i( n_{a,i}(q) - (n_{o,i}(q) - 1))}$.  However, we have
$n_{a,i}(q) \leq n_{a,i} - 1$ and $n_{o,i}(q) = n_{o,i}$ because $q_a <
q_o$, and $n_{o,i}(q) = n_{o,i}$ because \opt runs SRPT.  Therefore the
increase is at most $\frac{n_{a,i} - n_{o,i}}{Q_i(n_{a,i} - n_{o,i})}$
for all $q \in I_o$. Combining these two, we get
\begin{align*}
  \tsty \dphidt &\leq \tsty \constun \frac{n_{a,i} - n_{o,i}}{Q_i(n_{a,i}
    - n_{o,i})} \big[ -(1+\epsilon) s_a + s_o \big] \\
  &= \tsty \constun \big( n_{a,i} - n_{o,i} \big) \frac {\big[ -(1+\epsilon) Q_i(n_{a,i}) + Q_i(n_{o,i}) \big]} {Q_i(n_{a,i} - n_{o,i} )}  \\
  &\leq \tsty \constun \big( n_{a,i} - n_{o,i} \big)\frac{ -\epsilon\,
    Q_i(n_{a,i}) } {Q_i(n_{a,i} - n_{o,i})} \leq - 4 (n_{a,i} -
  n_{o,i}),
\end{align*}
where we repeatedly use that $Q_i(\cdot)$ is a non-decreasing function.
This implies that $4 n_{a,i} + \dphidtside \leq 4 n_{o,i}$.

\medskip \noindent {\bf Case (2b): $n_{a,i} \geq n_{o,i}$, and $q_a >
  q_o$}: In this case, for $q \in I_o$, the inner summation of $\Phi_i$
increases by $ \frac{n_{a,i}(q) - (n_{o,i}(q) - 1)}{Q_i( n_{a,i}(q) -
  (n_{o,i}(q) - 1))}$. Also, we have $n_{a,i}(q) = n_{a,i}$ and
$n_{o,i}(q) = n_{o,i}$, because $q \leq q_o < q_a$ and both algorithms
run SRPT. Therefore the overall increase in potential function can be
bounded by $\constun \frac{n_{a,i} - n_{o,i} + 1}{Q_i(n_{a,i} - n_{o,i} + 1)} \;
s_o \,dt $. Moreover, for $q \in I_a$, the inner summation of $\Phi_i$
drops by $ \frac{n_{a,i}(q) - n_{o,i}(q)}{Q_i(n_{a,i}(q) - n_{o,i}(q))}$. Also, $n_{a,i}(q) = n_{a,i}$ and $n_{o,i}(q) \leq n_{o,i} - 1$,
because $q_o < q_a$ and there was a job of remaining size $q_o$ among
the optimal solution's active jobs. Thus the potential function drops by
at least $ \constun \frac{n_{a,i} - n_{o,i} + 1}{Q_i(n_{a,i} - n_{o,i} + 1)}\; (1
+\epsilon) s_a \,dt $, since $x/Q_i(x)$ is a non-decreasing function.
Combining these terms,
\begin{align*}
  \tsty \dphidt &\leq \tsty \constun \frac{n_{a,i} - n_{o,i} +
    1}{Q_i(n_{a,i} - n_{o,i} + 1)} \big[ -(1+\epsilon) s_a + s_o \big]
  \\
  & = \constun \big( n_{a,i} - n_{o,i} + 1 \big) \frac {[ -(1+\epsilon)
    Q_i(n_{a,i}) + Q_i(n_{o,i}) ]} {Q_i(n_{a,i} - n_{o,i} + 1)}  \\
  &\tsty \leq - \constun \epsilon (n_{a,i} - n_{o,i} + 1) \leq - 4
  (n_{a,i} - n_{o,i})
\end{align*}
In the above, we have used the fact that $n_{o,i} \geq 1$, and consequently, $Q_i(n_{a,i}) \geq Q_i(n_{a,i}- n_{o,i} + 1)$. Therefore, in this case too we get $4 n_{a,i} + \dphidtside \leq 4 n_{o,i}$.

\medskip \noindent {\bf Case (2c): $n_{a,i} \geq n_{o,i}$, and $q_a =
  q_o$}: Since $n_{a,i} \geq n_{o,i}$, and $Q_i$ is an increasing
function, $s_a \geq s_o$ and thus $I_o \subset I_a$. For $q$ in the
interval $I_o$, the term $n_{o,i}(q)$ drops by $1$ \emph{and} the term
$n_{a,i}(q)$ drops by $1$, and therefore there is no change in
$n_{a,i}(q) - n_{o,i}(q)$. For $q \in I_a \setminus I_o$, the inner
summation for $\Phi_i$ drops by $\frac{n_{a,i}(q) - n_{o,i}(q) }{Q_i(
  n_{a,i}(q) - n_{o,i}(q))}$. Also, $n_{a,i}(q) = n_{a,i}$ and
$n_{o,i}(q) = n_{o,i}$, and the decrease in potential function is $\constun (
(1+\epsilon) s_a \,dt - s_o \,dt) \; \frac{n_{a,i} - n_{o,i} }{Q_i(n_{a,i} -
  n_{o,i})}$. But the analysis in Case~(2a) implies that $4n_{a,i}
+ \dphidtside \leq 4 n_{o,i}$ in this case as well.

Summing over all $i$, we
get
\begin{lemma}
  \label{dlem:running}
  The running condition holds for the unweighted case with constant $4$.
  At any time $t$ when there are no job arrivals, we have
\end{lemma}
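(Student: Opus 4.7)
The plan is to verify the running condition on a per-processor basis and sum. Since $\Phi(t) = \sum_i \Phi_i(t)$ and the \alw rule makes both $G_A(t) = 2 \sum_i n_{a,i}$ and $G_{OPT}(t) = 2 \sum_i n_{o,i}$ decompose over processors, it suffices to establish the local inequality $4 n_{a,i} + d\Phi_i(t)/dt \leq 4 n_{o,i}$ for each processor $i$ at every non-arrival time. The global statement then follows by adding these up across $i$.

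For a fixed processor I would case-analyze on the relationship between $n_{a,i}$ and $n_{o,i}$. If $n_{a,i} < n_{o,i}$, SRPT-monotonicity ensures that the drop in $n_{o,i}(q)$ caused by \opt working at the currently-densest queued job happens at precisely those $q$-values where the inner sum of $\Phi_i$ is already empty (since $n_{a,i}(q) \leq n_{a,i} < n_{o,i} = n_{o,i}(q)$ there), so $d\Phi_i(t)/dt \leq 0$ and the bound is immediate from $n_{a,i} \leq n_{o,i}$. The interesting regime is $n_{a,i} \geq n_{o,i}$, which I would further subdivide according to the comparison of the remaining sizes $q_a$ and $q_o$ of the two SRPT-selected jobs.

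In each subcase ($q_a < q_o$, $q_a > q_o$, $q_a = q_o$), the intervals $I_a := (q_a - (1+\epsilon) s_a \, dt, q_a]$ and $I_o := (q_o - s_o \, dt, q_o]$ dictate where $n_{a,i}(q)$ and $n_{o,i}(q)$ change during $[t, t+dt)$. By SRPT, these functions are constant and equal to $n_{a,i}$ and $n_{o,i}$ on the relevant intervals (or shifted by $1$ if the opposing algorithm has already depleted that range), so the decrease in $\Phi_i$ due to \alg working at augmented speed $(1+\epsilon) s_a$ and the possible increase due to \opt working at speed $s_o$ can both be expressed in closed form. Combining them, and using monotonicity of $Q_i$, the net change collapses to an expression of the form
\[ d\Phi_i(t)/dt \;\leq\; \frac{4}{\epsilon} (n_{a,i}-n_{o,i}) \cdot \frac{-(1+\epsilon) Q_i(n_{a,i}) + Q_i(n_{o,i})}{Q_i(n_{a,i}-n_{o,i})} \;\leq\; -4(n_{a,i}-n_{o,i}), \]
where the final step uses the $(1+\epsilon)$ speed augmentation precisely to absorb $Q_i(n_{o,i})$ against $(1+\epsilon) Q_i(n_{a,i})$.

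The main obstacle will be the bookkeeping in the subcases where $I_a$ and $I_o$ interact nontrivially. In the subcase $q_a > q_o$, the existence of an \opt-job of remaining size $q_o < q_a$ forces $n_{o,i}(q) \leq n_{o,i} - 1$ on $I_a$, so the controlling potential term shifts to $(n_{a,i}-n_{o,i}+1)/Q_i(n_{a,i}-n_{o,i}+1)$, and one must verify $n_{o,i} \geq 1$ to preserve monotonicity. In the subcase $q_a = q_o$, the inclusion $I_o \subset I_a$ partially cancels the $n_{a,i}(q)$ and $n_{o,i}(q)$ drops, reducing the analysis to the residual interval $I_a \setminus I_o$. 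Once the per-processor bound is verified in every subcase, summing over $i$ yields the advertised running condition.
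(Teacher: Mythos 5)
Your plan reproduces the paper's argument: the same per-processor reduction to $4n_{a,i} + d\Phi_i/dt \leq 4n_{o,i}$, the same case split on $n_{a,i}$ vs.\ $n_{o,i}$ and subcases on $q_a$ vs.\ $q_o$ with intervals $I_a, I_o$, the same use of SRPT to pin down $n_{a,i}(q)$ and $n_{o,i}(q)$ on those intervals, and the same final absorption of $Q_i(n_{o,i})$ by $(1+\epsilon)Q_i(n_{a,i})$ via monotonicity of $Q_i$. This matches the paper's proof in structure and substance.
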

Combining Lemmas~\ref{dlem:arrival} and~\ref{dlem:running} with the
standard potential function argument indicated in
Section~\ref{sec:preliminaries}, we get the following theorem.
\begin{theorem}
  \label{dthm:unwtd}
  There is a $(1+\epsilon)$-speed $O(1/\epsilon)$-competitive
  immediate-assignment algorithm to minimize the total flow plus energy on
  heterogeneous processors with arbitrary power functions.
\end{theorem}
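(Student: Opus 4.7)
My plan is to combine Lemmas~\ref{dlem:arrival} and~\ref{dlem:running} with the standard amortized local-competitiveness framework from Section~\ref{sec:preliminaries}, and then convert the resulting bound against our idealized adversary into a bound against the true offline optimum by invoking the single-processor guarantee of \alw.

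First, I would identify the instantaneous costs. Because our algorithm sets the power on processor $i$ to $n_{a,i}$ and accrues unit flow for each unfinished job, $G_A(t) = 2 \sum_i n_{a,i}$. Since the adversary we compete against (by construction) also runs \alw on each processor after its assignment, symmetrically $G_{OPT}(t) = 2 \sum_i n_{o,i}$.

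Second, I would verify the four conditions on $\Phi(t) = \sum_i \Phi_i(t)$ from~(\ref{deq:pot-uf}). The boundary condition is immediate since $\Phi$ vanishes whenever neither side has any outstanding job. The completion condition holds because a completion only decreases one of $n_{a,i}(q)$ or $n_{o,i}(q)$; combined with the truncation $(\cdot)_+$, the integrand in~(\ref{deq:pot-uf}) cannot increase. The arrival condition is exactly Lemma~\ref{dlem:arrival}, giving $d = \constun$. The running condition is Lemma~\ref{dlem:running}: summing the per-processor bound $4 n_{a,i} + d\Phi_i(t)/dt \le 4 n_{o,i}$ over $i$ yields $G_A(t) + d\Phi(t)/dt \le 2\, G_{OPT}(t)$, so $c = 2$. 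Integrating these four conditions over time in the usual way delivers an $(c+d) = O(1/\epsilon)$ bound on the ratio of the online cost to the idealized adversary's cost.

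The one step that demands care, and that I would call out as the main obstacle, is moving from ``competitive against the idealized adversary'' to ``competitive against the true offline optimum''. The adversary in our analysis is constrained to run \alw on each processor after its (otherwise unrestricted) assignment. To discharge this, I would fix the true optimum's assignment and replace its per-processor schedule with \alw; the $2$-competitiveness of \alw on a single processor~\cite{Lachlan2009} guarantees that this substitution inflates the cost by at most a factor of $2$. Hence the idealized adversary, which may additionally optimize over assignments, costs at most twice the true unrestricted optimum, and composing the two constant factors yields the claimed $(1+\epsilon)$-speed, $O(1/\epsilon)$-competitive algorithm.
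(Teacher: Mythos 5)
Your assembly of Lemmas~\ref{dlem:arrival} and~\ref{dlem:running} with the standard amortized framework from Section~\ref{sec:preliminaries}, together with the up-front reduction to an \alw-constrained adversary via the single-processor $2$-competitiveness of~\cite{Lachlan2009}, is exactly the paper's argument (the paper performs the same adversary reduction in the paragraph preceding the algorithm and then closes the theorem in one sentence). One small slip worth flagging: your stated reason for the completion condition is not right --- when the (\alw-constrained) optimum completes a job, $n_{o,i}(q)$ decreases, and the truncation $(\cdot)_+$ would then make the integrand of~(\ref{deq:pot-uf}) \emph{larger}, not keep it from growing; the condition actually holds because a completing job has remaining work $0$ and so contributes to $n_{a,i}(q)$ or $n_{o,i}(q)$ only at $q=0$, a measure-zero set, whence $\Phi$ is continuous across completion events.
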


\bigskip
\noindent
{\bf Acknowledgments:} We thank Sangyeun Cho and Bruce Childers for helpful discussions
about heterogeneous multicore processors, and also Srivatsan Narayanan for several useful discussions.

\bibliographystyle{plain}
\bibliography{icalp}

\appendix

\section{Estimating the Future Cost of BCP}
\label{sec:cost-estimate}
Suppose we have a set of $n$ jobs, with weight $w_j$ and processing time
$p_j$ for job $j$ ($1 \leq j \leq n$) such that $p_1/w_1 \leq p_2/w_2 \leq \ldots \leq p_n/w_n$. Let $d_j = p_j/w_j$ denote the inverse density of a job $j$.
We now explain how we get the estimate of the future cost of this configuration
when we run the BCP algorithm, i.e. HDF at a speed of $Q(w^t)$, where $w^t$ is the total fractional weight
of unfinished jobs at time $t$.

Firstly, observe that by virtue of our algorithm running HDF, it schedules job $1$ followed by job $2$, etc.
Also, as long as the algorithm is running job $1$, it runs the processor at speed
$Q(W_{\geq 2} + \widetilde{w_1}(t))$, where $W_{\geq 2} := w_2 + w_3 + \ldots w_n$ and $\widetilde{w_1}(t)$ is
the fractional weight of job $1$ remaining.
Secondly, since our algorithm always uses power equal to the fractional weight remaining, the rate of increase of the objective function at any time $t$ is simply $2 w^t$.
Therefore, the following equations immediately follow:
\begin{eqnarray}
\nonumber G_A(t) &=& \frac{dA}{dt} = 2\big(W_{\geq 2} + \widetilde{w_1}(t)\big) \\
\nonumber \frac{ d \widetilde{w_1}(t)}{dt} &=& -\bigg( \frac{w_1}{p_1}\bigg) Q\bigg( W_{\geq 2} + \widetilde{w_1}(t) \bigg) \\
\nonumber \Rightarrow \frac{dA}{ d \widetilde{w_1}(t)} &=& - 2 \bigg( \frac{p_1}{w_1} \bigg) \frac{W_{\geq 2} + \widetilde{w_1}(t)}{Q\big( W_{\geq 2} + \widetilde{w_1}(t) \big)} \\
\nonumber \Rightarrow A_1 &=&  - 2 \int_{x = W_{\geq 2} + w_1}^{W_{\geq 2}} d_1 \frac{W_{\geq 2} + x}{Q\big( W_{\geq 2} + x \big)} \,dx
\end{eqnarray}
That is, the total cost incurred while job $1$ is being scheduled is
$$
 2 \int_{x = W_{\geq 2}}^{W_{\geq 2} + w_1} d_1 \frac{W_{\geq 2} + x}{Q\big( W_{\geq 2} + x \big)} \,dx = 2 \int_{q = 0}^{d_1} \int_{x = W_{\geq 2}}^{W_{\geq 2} + w_1} \frac{W_{\geq 2} + x}{Q\big( W_{\geq 2} + x \big)} \,dx \,dq
 $$
Similarly, while any job $i$ is being scheduled, we can use the same arguments as above to show that the total fractional flow incurred is
 $$
 2 \int_{x = W_{\geq (i+1)}}^{W_{\geq (i+1)} + w_i} d_i \frac{W_{\geq (i+1)} + x}{Q\big( W_{\geq (i+1)} + x \big)} \,dx = 2 \int_{q = 0}^{d_i} \int_{x = W_{\geq (i+1)}}^{W_{\geq (i+1)} + w_i} \frac{W_{\geq (i+1)} + x}{Q\big( W_{\geq (i+1)} + x \big)} \,dx \,dq
 $$
 Summing over $i$, the total fractional flow incurred by our algorithm
 is
 $$
2 \sum_{i =1}^{n} \int_{q = 0}^{d_i} \int_{x = W_{\geq (i+1)}}^{W_{\geq (i+1)} + w_i} \frac{W_{\geq (i+1)} + x}{Q\big( W_{\geq (i+1)} + x \big)} \,dx \,dq
 $$
Rearranging the terms, it is not hard to see (given $d_1 \leq d_2 \leq \ldots \leq d_n$) that this is equal to
$$
2 \int_{q = 0}^{\infty} \int_{x = 0}^{w(q)} \frac{x}{Q(x)} \,dx \,dq
$$
where $w(q)$ is the total weight of jobs with inverse density at least $q$.

\section{Subadditivity of x/Q(x)}
\label{sec:subadditive}
Let $Q(x): \R_{\geq 0} \rightarrow \R_{\geq 0}$ be any concave function such that $Q(0) \geq 0$ and $Q'(x) \geq 0$ for all $x \geq 0$, and let $g(x) = x/Q(x)$. Then the following facts are true about $g(\cdot)$.
\begin{enumerate}
\item[(a)] $g(\cdot)$ is non-decreasing. That is, $g(y) \geq g(x)$ for all $y \geq x$.
\item[(b)] $g(\cdot)$ is subadditive. That is, $g(x) + g(y) \geq g(x+y)$ for all $x, y \in \R_{\geq 0}$
\end{enumerate}

To see why the first is true, consider $x$ and $y = \lambda x$ for some $\lambda \geq 1$. Then,
showing (a) is equivalent to showing
$$
\frac{\lambda x}{Q(\lambda x)} \geq \frac{x}{Q(x)}
$$
But this reduces to showing $Q(\lambda x) \leq \lambda Q(x)$ which is true because $Q(\cdot)$ is a concave function. To prove the second property, we first observe that the function $1/Q(x)$ is convex. This is because the second derivative of $1/Q(x)$ is
$$
\frac{-Q(x)^2 Q''(x) + 2 Q(x) Q'(x)^2}{Q(x)^4}
$$
which is always non-negative for all $x \geq 0$, since $Q(x)$ is non-negative and $Q''(x)$ is non-positive for all $x\geq 0$. Therefore, since $1/Q(\cdot)$ is convex, it holds for any $x$, $y$, and $\alpha \geq 0$, $\beta \geq 0$ that
$$
\frac{\frac{\alpha}{Q(x)} + \frac{\beta}{Q(y)}}{\alpha + \beta} \geq \frac{1}{Q(\frac{\alpha x + \beta y}{\alpha + \beta})}
$$
Plugging in $\alpha = x$ and $\beta = y$, we get
$$
\frac{\frac{x}{Q(x)} + \frac{y}{Q(y)}}{x + y} \geq \frac{1}{Q(\frac{x^2 + y^2}{x + y})}
$$
which implies
$$
\frac{x}{Q(x)} + \frac{y}{Q(y)} \geq \frac{x + y}{Q(\frac{x^2 + y^2}{x + y})}
$$
But since $Q(\cdot)$ is non-decreasing, we have $Q(x + y) \geq Q(\frac{x^2 + y^2}{x + y})$ and hence
$$
\frac{x}{Q(x)} + \frac{y}{Q(y)} \geq \frac{x + y}{Q(x + y)}
$$

\section{Missing Proofs}
\label{sec:missing-proofs}

\begin{proofof}{Lemma~\ref{lem:concave-arrival}}
We first show that
$$\nonumber \int_{x=w_a}^{w_a + w_j} g(x) \,dx - \int_{x = (w_a - w_o - w_j)_{+}}^{(w_a - w_o)_{+}} g(x) \,dx \leq \int_{x=0}^{w_j} g(w_o + w_j) \,dx$$ and then argue that $\int_{x=0}^{w_j} g(w_o + w_j) \,dx \leq 2 \int_{x=0}^{w_j} g(w_o + x) \,dx$ because $g(\cdot)$ is subadditive.
To this end, we consider several cases and prove the lemma.
Suppose $w_a$ is such
that $w_a \geq w_o + w_j$: in this case we can discard the
$(\cdot)_+$ operators on all the limits to get
\begin{align*}
  & \int_{x = w_a}^{w_a + w_j} g(x) \, dx \; - \;
  \int_{x = w_a - w_o  - w_j}^{w_a - w_o} g(x)\,dx
  \\
  &= \int_{x = 0}^{w_j} \bigg( g(w_a + x) - g(w_a - w_o
  - w_j + x) \bigg) \,dx \; \leq \; \int_{x = 0}^{w_j} g(w_o + w_j) \,dx
\end{align*}
Here, the final inequality follows because $g(\cdot)$ is a subadditive function.
On the other hand, suppose it is the case that $w_a \leq w_o$, then both limits $(w_a - w_o)_{+}$ and $(w_a - w_o - w_j)_{+}$ are zero, and therefore we only need to bound $ \int_{x = w_a}^{w_a + w_j} g(x) \, dx$, which can be done as follows:
\begin{gather*}
  \int_{x = w_a}^{w_a + w_j} g(x) \, dx = \int_{x =
    0}^{w_j} g(w_a + x) \, dx \leq \int_{x = 0}^{w_j}
  g(w_o + w_j) \, dx
\end{gather*}
Finally, if $w_a = w_o + \delta$ for some $\delta \in (0, w_j)$,
we first observe that $\int_{x = (w_a - w_o - w_j)_{+}}^{(w_a - w_o)_{+}} g(x) dx$ simplifies to $\int_{x = 0}^{\delta} g(x) dx$. Therefore, we are interested in bounding
\begin{align*}
  & \int_{x = w_a}^{w_a + w_j} g(x) \, dx \; - \; \int_{x =
    0}^{\delta} g(x)\,dx = \int_{x = w_a}^{w_a + w_j - \delta}
  g(x) \, dx + \int_{x = w_a + w_j - \delta}^{w_a + w_j}
  g(x) \, dx \; - \; \int_{x =
    0}^{\delta} g(x)\,dx \\
  &\leq (w_j - \delta) g(w_a + w_j - \delta) + \int_{x =
    0}^{\delta} g(w_a + w_j - \delta) \,dx \; \leq \; \int_{x =
    0}^{w_j} g(w_o + w_j) \,dx
\end{align*}
Here again, in the second to last inequality, we used the fact that $g(\cdot)$ is subadditive and therefore $g(w_a + w_j - \delta + x) - g(x) \leq g(w_a + w_j - \delta)$, for all values of $x \geq 0$; hence we get $\int_{x = w_a + w_j  - \delta}^{w_a + w_j} g(x) \, dx - \int_{x = 0}^{\delta} g(x) \, dx \leq \int_{x = 0}^{\delta} g(w_a + w_j  - \delta) \, dx$.

\medskip \noindent To complete the proof, we need to show that $\int_{x=0}^{w_j} g(w_o + w_j) \,dx \leq 2 \int_{x=0}^{w_j} g(w_o + x) \,dx$. To see this, consider the following sequence of steps:

For any $x \in [0, w_j]$, since $g$ is subadditive, we have
$$
g(w_o + w_j - x) + g(x) \geq g(w_o + w_j)
$$
Integrating both sides from $x = 0$ to $w_j$ we get
$$
\int_{x = 0}^{w_j} g(w_o + w_j - x) \,dx  + \int_{x=0}^{w_j} g(x) \,dx \geq \int_{x = 0}^{w_j} g(w_o + w_j) \,dx
$$
which is, by variable renaming, equivalent to
$$
\int_{y = 0}^{w_j} g(w_o + y) \,dy  + \int_{x=0}^{w_j} g(x) \,dx \geq \int_{x = 0}^{w_j} g(w_o + w_j) \,dx
$$
But since $g(\cdot)$ is non-decreasing, we have $\int_{x=0}^{w_j} g(x) \,dx \leq \int_{x=0}^{w_j} g(w_o + x) \,dx$ and therefore
$$
\int_{y = 0}^{w_j} g(w_o + y) \,dy  + \int_{x=0}^{w_j} g(w_o + x) \,dx \geq \int_{x = 0}^{w_j} g(w_o + w_j) \,dx
$$
which is what we want.
\end{proofof}

\end{document}